\titleformat*{\section}{\large\bfseries}
\titleformat*{\subsection}{\large\bfseries}
\titleformat*{\subsubsection}{\large\bfseries}
\titleformat*{\paragraph}{\normalsize\bfseries}
\titleformat*{\subparagraph}{\normalsize\bfseries}
                \newcommand{\blx@nowarnpolyglossia}{}
\pgfplotsset{compat=newest}
\newtheorem{thm}{Theorem}[section]
\newtheorem{lem}[thm]{Lemma}
\theoremstyle{definition}
\newtheorem{defn}[thm]{Definition}
\newtheorem{rem}[thm]{Remark}
\newcommand{\rNNI}{{r\negthinspace{}N\negthinspace{}N\negthinspace{}I}}
\newcommand{\TBN}{{T\negthinspace{}B\negthinspace{}N}}
\newcommand{\RP}{{R\negthinspace{}P}}
\newcommand{\ot}{\gets}
\title{\Large\bfseries Rooted NNI moves on tree-based phylogenetic networks}
\author{Péter L. Erdős\textsuperscript{1,}\footnote{PLE and TRM were supported
	in part by the National Research, Development and Innovation Office ---
NKFIH grant K~116769, KH~126853, and K~132930} \\ \texttt{\normalsize
peter.erdos@renyi.hu}\and Andrew Francis\textsuperscript{2}\\
\texttt{\normalsize A.Francis@westernsydney.edu.au}\and Tamás Róbert
Mezei\textsuperscript{1,\fnsymbol{footnote},}\footnote{Corresponding author}\\
\texttt{\normalsize tamas.robert.mezei@renyi.hu}}
\date{%
	\small $^{1}$Alfréd Rényi Institute of Mathematics (a Hungarian Academy
	of Sciences Centre of Excellence), Reáltanoda~u.\ 13--15, 1053
	Budapest, Hungary \\ $^{2}$ Centre for Research in Mathematics and Data
	Science, Western Sydney University, Sydney, Australia\\
        \vspace{10pt}
        \normalsize \today
        \vspace{-12pt}
}
\begin{document}
\maketitle

\begin{abstract}
	We show that the space of rooted tree-based phylogenetic networks is
	connected under rooted nearest-neighbor interchange ($\rNNI$) moves.
\end{abstract}
{\small\textit{Keywords}: nearest-neighbor interchange, rooted NNI, rooted
phylogenetic network, directed acyclic graph}

\section{Introduction}

Phylogenetic networks are a generalisation of phylogenetic trees that have
become widely used as ways to represent evolutionary histories, because they are
able to either capture uncertainty in the inference, or represent non-tree-like
evolutionary processes~\cite{doolittle1999phylogenetic,koonin2015turbulent} (see
also the texts~\cite{huson2010phylogenetic,steel2016phylogeny}).  Such processes
include hybridization, in which two species combine to produce a third, and
horizontal gene transfer, in which genetic material from one species is
incorporated into that of another (common in bacteria).

Despite these non-tree-like evolutionary events, evolution can still appear
``tree-like'', in the sense that it may be representable as having a broad,
underlying tree, with additional arcs (directed edges) between the arcs of the
tree.  This sense motivated the definition of a ``tree-based
network''~\cite{francis2015phylogenetic}.

Tree-based networks have become an active area of research because they capture
biological intuition and have many mathematical
characterisations~\citep{francis2018new,jetten2016nonbinary,zhang2016tree} and
connections to other well-studied properties (for example they are precisely the
``tree-child'' networks for which every embedded tree is a base
tree~\cite{semple2016phylogenetic}).

For many applications, it is important to be able to randomly move around a set
of phylogenetic networks, for instance when searching for a network that
maximises a likelihood, or has the highest parsimony score. Mechanisms that
allow such movement are important, as without them such sampling is very
difficult.

The \emph{nearest neighbour interchange (NNI)} is a local operation on a graph
that is widely used for moving around the space of trees or networks.  It was
introduced for phylogenetic trees in 1971~\cite{robinson1971comparison},
generalised to unrooted phylogenetic networks in
2016~\cite{huber2016transforming}, and for rooted networks shortly
after~\cite{gambette_rearrangement_2017} (where the move is called rNNI).  The
spaces of such trees and networks are connected under the relevant rNNI moves,
and this allows random walks within those spaces to search for optimal trees or
networks.

In this paper we prove that the space of \emph{rooted tree-based networks} is
connected under rNNI moves. This is the rooted analogue of the result
of~\citet{fischer2019space} showing the connectedness of (unrooted) tree-based
networks under NNI moves.  We also show that the space is  connected under
the recently introduced \emph{tail}-%
moves~\cite{janssen2018exploring}.

The paper begins by introducing necessary concepts in Section~\ref{s:prelims}.
We then explore the effect of rNNI moves on a tree-based network with some
technical results in Section~\ref{s:impact}, and finally prove connectedness in
Section~\ref{s:connectedness}.

There are many opportunities to extend this work.  For instance, extending the
understanding of tree-based networks as a space, it would be interesting to
extend the notion of \emph{tree-based rank}, introduced for unrooted networks
in~\citep{fischer2019treebased}, to the rooted case.  This would be another
generalization of the proximity measures for rooted tree-based networks
discussed in~\cite{francis2018new}.
Finally, there are many other useful classes of network that might be connected
under such rearrangements, including well-studied classes such as the tree-child
network, and the recently introduced \emph{orchard}
networks~\cite{erdos2019class}.

\section{Preliminaries}\label{s:prelims}

A \emph{rooted phylogenetic network} $N$ on $X$ is a directed, simple acyclic
graph with the following types of vertices:
\begin{itemize}
	\setlength{\itemsep}{0pt}
	\item a single vertex of out-degree 1 or 2 and in-degree 0 called the
		\emph{root};
	\item vertices of in-degree 1 and out-degree 0 called \emph{leaves},
		which are labelled bijectively by the elements of $X$;
	\item vertices of in-degree 1 and out-degree 2, called \emph{tree
		vertices};
	\item vertices of in-degree 2 and out-degree 1, called
		\emph{reticulation vertices}.
\end{itemize}
Write $V=V(N)$ for the set of vertices of $N$, and $E=E(N)$ for the set
of arcs (directed edges) of $N$.  For an arc $e=(u,v)\in E$, write
$s(e):=u$ and $t(e):=v$ for the \emph{source} and \emph{target} of $e$,
respectively.  If $(u,v)\in E(N)$, we say $N$ has \emph{an arc on
$\{u,v\}$}.

Rooted phylogenetic networks with the above properties are commonly called
\emph{binary}.  Denote the set of rooted phylogenetic networks on $X$ by
$\RP(X)$.  Throughout this paper phylogenetic networks will be taken to be both
rooted and binary unless otherwise stated.

A rooted phylogenetic network without reticulation vertices is actually a rooted
tree, hence it is called a rooted phylogenetic $X$-tree.

An arc $e=(u,v)$ of $N\in\RP(X)$ may be \emph{subdivided} by removing $e$, and
adding a new vertex $w$ and new arcs $(u,w)$ and $(w,v)$.  A network with a
subdivided arc is no longer a phylogenetic network because it contains a vertex
of degree 2.  In the other direction, a vertex $w$ of degree 2 may be
\emph{suppressed} by deleting it and its two incident arcs $(u,w)$ and $(w,v)$,
and adding the arc $(u,v)$ to the network.

A rooted phylogenetic network that has a spanning tree $T$ whose leaves are
precisely the leaves of $N$, is a \emph{tree-based
network}~\cite{francis2015phylogenetic}.  Such a spanning tree for a tree-based
network $N$ is called a \emph{support tree} for $N$.  Note that a support tree
for $N$ is generally not a phylogenetic $X$-tree, because it will have vertices
of degree 2 (unless $N$ is itself a tree, in which case $T=N$).  By
``suppressing'' the vertices of degree 2 in $T$, one obtains a phylogenetic
$X$-tree $\widehat T$ that is called a \emph{base tree} for $N$, in the sense
that $N$ may be obtained from $\widehat T$ by ``subdividing'' arcs of $T$ and
adding additional arcs, as in the original definition
in~\cite{francis2015phylogenetic}.

The set of tree-based networks is denoted $\TBN(X)\subseteq\RP(X)$.

Nearest neighbour interchange (NNI) operations defined on phylogenetic trees
have been used to explore the space of trees for half a
century~\cite{robinson1971comparison}.  They have recently been generalised to
unrooted phylogenetic networks~\cite{huber2016transforming}, and to rooted
networks~\cite{gambette_rearrangement_2017}, as in Definition~\ref{d:rNNI}.

\begin{defn}\label{d:rNNI}
	Suppose $N\in\RP(X)$ has arcs on $\{a,b\},\{b,c\},\{c,d\}$, for
	distinct vertices $a,b,c,d\in V(N)$.  A \emph{rooted nearest neighbour
	interchange} ($\rNNI$) move on $\{a,b\},\{b,c\},\{c,d\}$, replaces
	those arcs with arcs on $\{a,c\},\{b,c\},\{b,d\}$, with the following
	conditions:
	\vspace{-4pt}
	\begin{enumerate}
		\setlength{\itemsep}{0pt}
		\item the in-degrees and out-degrees of $a$ and $d$ are
			unchanged;
		\item the in-degrees and out-degrees of $b$ and $c$ remain 1 or
			2;
		\item\label{d:rNNI.acyclic} the network remains an acyclic
			phylogenetic network.
	\end{enumerate}
\end{defn}

\medskip

Note that~\eqref{d:rNNI.acyclic} precludes the network $N$
from containing arcs on the arcs $\{a,c\}$ and $\{b,d\}$.

An $\rNNI$ move is a local operation on a subgraph of $N$ of four vertices and
three arcs.  If $P$ and $Q$ are subgraphs of $N$ such that $|V(P)|=|V(Q)|=4$
and $|E(P)|=|E(Q)|=3$, we say that an $\rNNI$ move switches $P$ to $Q$ if it
changes $N$ to $N'$ where $E(N')=(E(N)\setminus E(P))\cup E(Q)$.

For the proof of the main result we will need the notion of a ``burl-rooted
tree'', defined as follows.  This is a rooted version of the networks with
``$k$-burls'' used in~\cite{janssen2019rearrangement}.

\begin{defn}\label{d:burl.rooted.tree}
A \emph{burl-rooted tree} is a rooted phylogenetic network $N$ with reticulation
vertices $b_1,\dots,b_k$ and root $\rho$ with the following properties:
\vspace{-4pt}
\begin{itemize}
	\setlength{\itemsep}{0pt}
	\item there is a path from $\rho$ to a leaf $\ell_1$ that consists only
		of the vertices $(\rho,b_1,\dots,b_k,\ell_1)$;
	\item all paths from $\rho$ to other leaves begin
		$(\rho,a_1,\dots,a_k,u,\dots)$ for tree vertices
		$a_1,\dots,a_k$; and
	\item $N$ contains arcs $(a_i,b_i)$ for $i=1,\dots,k$.
\end{itemize}
\end{defn}

The structure of a burl-rooted tree is illustrated in Figure~\ref{fig:burl}.

\begin{figure}[H]
	\centering
	\begin{tikzpicture}\footnotesize
		\begin{scope}
			\node[draw, circle, inner sep=0pt, minimum
				size=12pt] (r) at (0,0) {$\rho$};
			\node[draw, circle, inner sep=0pt, minimum
				size=12pt] (l1) at ($ (r)-(3,3) $)
				{$\ell_1$};
			\node[draw, circle, inner sep=0pt, minimum
				size=12pt] (u) at ($ (r)-(-3,3) $)
				{$u$};
			\def\k{6}

			\draw[thick,->] (r)--(l1);
			\draw[thick,->] (r)--(u);

			\foreach \i/\j in {1/1,2/2,4/k}
			{%
			\pgfmathsetmacro\ratio{(\k-\i)/\k};
		\node[fill=white,draw, circle, inner sep=0pt, minimum size=12pt]
				(t\i) at ( $\i/\k*(l1)+\ratio*(r)$ ) {$b_{\j}$};
			\node[fill=white,draw, circle, inner sep=0pt, minimum size=12pt]
				(s\i) at ( $\i/\k*(u) +\ratio*(r)$ ) {$a_{\j}$};
			\draw[->] (s\i) edge[bend left=20] node[midway,below]
				{$\j$} (t\i);
			}

			\node at ($ 0.5*(r)+0.25*(l1)+0.25*(u)-(0,.4) $) {$\vdots$};

			\node[draw, circle, inner sep=0pt, minimum
				size=12pt] (l2) at ($ (u)-(3,3) $)
				{$\ell_2$};
			\node[draw, circle, inner sep=0pt, minimum
				size=12pt] (lr) at ($ (u)-(-3,3) $)
				{$\ell_r$};

			\node[draw, circle, inner sep=0pt, minimum size=12pt]
				(l3) at ($ 5/6*(l2)+1/6*(lr) $) {$\ell_3$};
			\node[draw, circle, inner sep=0pt, minimum size=12pt]
				(l4) at ($ 4/6*(l2)+2/6*(lr) $) {$\ell_4$};
			\node at ($ 0.5*(l4)+0.5*(lr) $) {$\dots$};

			\node[draw, circle, inner sep=0pt, minimum size=12pt]
				(v) at ($ 5/6*(l2)+1/6*(u) $) {$v$};

			\draw[thick,->] (v) -- (l2);
			\draw[thick,->] (v) -- (l3);

			\draw[thick,->] ($ (lr)+2*(v)-2*(l3) $) -- (lr);
			\draw[thick,->] ($ (l4)+2*(v)-2*(l2) $) -- (l4);
			\draw[thick,->] ($ (v)+1.5*(v)-1.5*(l2) $) -- (v);

			\draw[thick,->] (u) -- ($ 4/6*(u)+2/6*(l2) $);
			\draw[thick,->] (u) -- ($ 4/6*(u)+2/6*(lr) $);

			\node at ($ 0.5*(u)+0.25*(l2)+0.25*(lr) $) {$\vdots$};

			\draw[dashed]	($ 1.2*(l2)-0.1*(u)-0.1*(lr) $) --
					($ 1.2*(u)-0.1*(lr)-0.1*(l2) $) --
					($ 1.2*(lr)-0.1*(l2)-0.1*(u) $) --
					($ 1.2*(l2)-0.1*(u)-0.1*(lr) $);
		\end{scope}
	\end{tikzpicture}
	\caption{The structure of a burl-rooted tier-$k$ tree. The $k$
	reticulation arcs join the vertices between $\rho-\ell_1$ and $\rho-u$.
	The vertices contained inside the dashed triangle induce a rooted binary
	tree with $r-1$ leaves.}\label{fig:burl}
\end{figure}
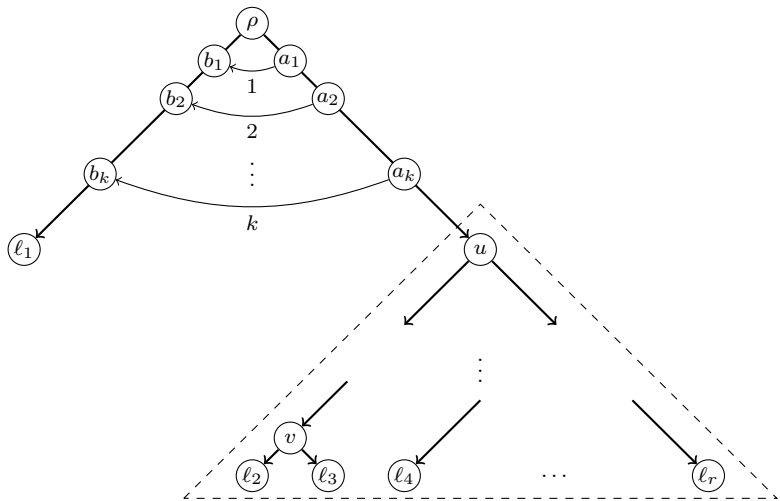

Finally we recall the definition of head and tail moves, introduced
in~\cite{janssen2018exploring}.

\begin{defn}\label{d:head.tail.move}
Let $e=(u,v)$ and $f$ be arcs in a rooted phylogenetic network $N$.  A
\emph{tail move} of $e$ to $f$ involves: deleting $e$; subdividing $f$ with a
new node $u'$; suppressing $u$; and adding the arc $(u',v)$.  A \emph{head move}
of $e$ to $f$ involves: deleting $e$; subdividing $f$ with a new node $v'$;
suppressing $v$; and adding the arc $(u,v')$.
\end{defn}

\section{The impact of \texorpdfstring{$\rNNI$}{rNNI} moves on
tree-based-ness}\label{s:impact}

\begin{lem}\label{lem:NNI} %
	Let $N$ be a rooted tree-based phylogenetic network with support tree $T$.
	Suppose $P:u\to v\to w\to z$ is a path of length 3 in $T$. Let $e,f\in
	E(N)\setminus E(P)$ be arcs incident to $v$ and $w$, respectively.
	If either
	\begin{enumerate}[label={$($\alph*$)$}]
		\setlength{\itemsep}{0pt}
		\item\label{lem:headNNI} $f$ is oriented away from $w$ and
			$e\neq vz$, or
		\item\label{lem:tailNNI} $e$ is oriented towards $v$ and
			$f\neq uw$, or
		\item\label{lem:treeNNI} $f\neq uw$ and $e\neq vz$, and $N$ does not contain a directed
			$t(e)\to s(f)$ path,
	\end{enumerate}
	then the $\rNNI$ move switching the path $P$ to $Q:u\to w\to v\to z$ is
	valid and the resulting network $N'$ is still tree-based.
	\begin{figure}[ht]
		\centering
		\begin{tikzpicture}
			\begin{scope}
				\node[draw, circle, inner sep=0pt, minimum
					size=16pt] (u) at (0,0) {$u$};
				\node[draw, circle, inner sep=0pt, minimum
					size=16pt] (v) at ($ (u)-(-1,1) $)
					{$v$};
				\node[draw, circle, inner sep=0pt, minimum
					size=16pt] (w) at ($ (u)-(-2,2) $)
					{$w$};
				\node[draw, circle, inner sep=0pt, minimum
					size=16pt] (z) at ($ (u)-(-3,3) $)
					{$z$};
				\draw[thick,->] (u)--(v);
				\draw[thick,->] (v)--(w);
				\draw[thick,->] (w)--(z);

				\draw (v) -- ($ (v)-(1.2,1.2) $) node[near
					end,above] {$e$};
				\draw (w) -- ($ (w)-(1.2,1.2) $) node[near
					end,above] {$f$};
				\node at (2,0) {$N$};
			\end{scope}
			\begin{scope}[xshift=7cm]
				\node[draw, circle, inner sep=0pt, minimum
					size=16pt] (u) at (0,0) {$u$};
				\node[draw, circle, inner sep=0pt, minimum
					size=16pt] (w) at ($ (u)-(-1,1) $)
					{$w$};
				\node[draw, circle, inner sep=0pt, minimum
					size=16pt] (v) at ($ (u)-(-2,2) $)
					{$v$};
				\node[draw, circle, inner sep=0pt, minimum
					size=16pt] (z) at ($ (u)-(-3,3) $)
					{$z$};
				\draw[thick,->] (u)--(w);
				\draw[thick,->] (w)--(v);
				\draw[thick,->] (v)--(z);

				\draw (v) -- ($ (v)-(1.2,1.2) $) node[near
					end,above] {$e$};
				\draw (w) -- ($ (w)-(1.2,1.2) $) node[near
					end,above] {$f$};
				\node at (2,0) {$N'$};
			\end{scope}
			\draw[->,very thick,decorate,decoration={snake,segment
				length=8pt,amplitude=2pt}] (4,-1.5)--(6,-1.5);
		\end{tikzpicture}
		\caption{An $\rNNI$ move which is valid if one of the three
		conditions of \Cref{lem:NNI} hold.}\label{fig:NNI}
	\end{figure}
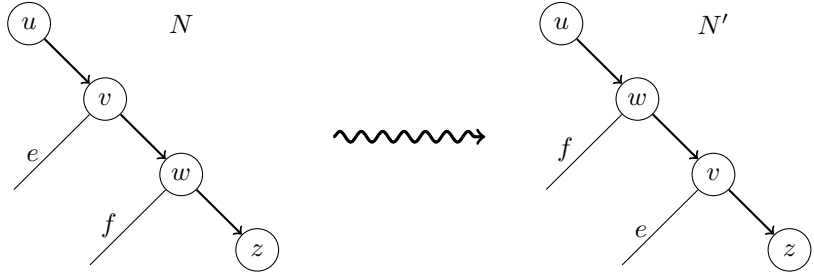
\end{lem}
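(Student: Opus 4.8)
The plan is to prove the two assertions in turn: that the indicated $\rNNI$ move is valid, and that $T' := (E(T)\setminus E(P))\cup E(Q)$ is a support tree for the resulting network $N'$, where $E(Q)=\{(u,w),(w,v),(v,z)\}$.

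I would first dispose of validity up to acyclicity. Since $v$ and $w$ are internal vertices they have degree $3$, so $e$ is the unique arc at $v$ outside $P$ and $f$ the unique arc at $w$ outside $P$. The swap trades, at each endpoint, one incident arc for another of the same orientation --- $u$ gives up the out-arc $(u,v)$ for $(u,w)$, $z$ gives up the in-arc $(w,z)$ for $(v,z)$, and at $v,w$ only the middle arc is reversed --- so the in- and out-degree of every vertex is preserved. This makes conditions (1)--(2) of Definition~\ref{d:rNNI} automatic and keeps each vertex's type. To rule out parallel arcs I must check that $N$ has no arc on $\{u,w\}$ or on $\{v,z\}$; because $f$ (resp.\ $e$) is the only non-path arc that could lie on $\{u,w\}$ (resp.\ $\{v,z\}$), this is exactly $f\neq uw$ and $e\neq vz$. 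Both are assumed in (c); under (a) the orientation of $f$ gives $f\neq uw$ directly while $f=(w,u)$ is barred by acyclicity of $N$, and symmetrically for (b).

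The substance, and the step I expect to be the main obstacle, is acyclicity of $N'$. I would argue via a fixed topological order $\sigma$ of $N$: the path $u\to v\to w\to z$ forces $\sigma(u)<\sigma(v)<\sigma(w)<\sigma(z)$, and every arc of $N\setminus E(P)$ is $\sigma$-increasing. Among the three new arcs, $(u,w)$ and $(v,z)$ are $\sigma$-increasing, so only $(w,v)$ is $\sigma$-decreasing. Since $N\setminus E(P)$ is acyclic, any directed cycle in $N'$ must use a $\sigma$-decreasing arc and hence must contain $(w,v)$, after which it needs a directed $v\to w$ path using the remaining ($\sigma$-increasing) arcs. I would then show this return path can only take one shape: to leave $v$ one is forced onto $e$ (the arc $(v,z)$ overshoots $w$ in $\sigma$ and is a dead end, and $(u,v),(v,w)$ are gone), so $e$ must point out of $v$; to enter $w$ one is forced onto $f$ (using $(u,w)$ would first require reaching $u$ from $v$, impossible under $\sigma$, and $(v,w)$ is gone), so $f$ must point into $w$; and the interior of the path lies in $N\setminus E(P)$ and is thus a directed $t(e)\to s(f)$ path in $N$. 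Each hypothesis destroys exactly this configuration: (a) makes $f$ point out of $w$, (b) makes $e$ point into $v$, and (c) forbids the $t(e)\to s(f)$ path. Hence no cycle exists and the move is valid.

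Finally, for tree-basedness I would verify that $T'$ is a support tree. As an undirected graph the swap deletes $\{u,v\}$ and $\{w,z\}$ and inserts $\{u,w\}$ and $\{v,z\}$, retaining $\{v,w\}$. Deleting two edges from the tree $T$ yields three components, which (because $T$ is a tree, so the unique $u$--$z$ path runs through the deleted edges) are distinct and satisfy $u\in U$, $v,w\in M$, $z\in Z$; inserting $\{u,w\}$ and $\{v,z\}$ reconnects them without creating a cycle, so $T'$ is a spanning tree of $N'$. Because the swap preserves every in- and out-degree within $T$ just as it does in $N$, the tree $T'$ is again an arborescence rooted at $\rho$ whose out-degree-$0$ vertices coincide with those of $T$, i.e.\ with the leaves of $N=N'$. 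Therefore $T'$ is a support tree and $N'$ is tree-based.
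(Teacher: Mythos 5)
Your proof is correct, and its skeleton --- well-definedness of the move, acyclicity of $N'$, then exhibiting $T'=T-E(P)+E(Q)$ as a support tree --- coincides with the paper's; the difference lies in how the central step, acyclicity, is argued. The paper takes a \emph{shortest} directed cycle $\vec C$ in $N'$ and runs a case analysis on which of $e$, $f$, $(w,v)$ it traverses: if all three, one reads off the forbidden configuration ($f$ into $w$, $e$ out of $v$, and a $t(e)\to s(f)$ path in $N$), which contradicts each of (a), (b), (c); if $e,f\in\vec C$ but $(w,v)\notin\vec C$, minimality is violated by shortcutting through $(w,v)$; and in the remaining cases $\vec C$ is shortened or extended by one arc into a cycle of $N$, contradicting acyclicity of $N$. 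You instead fix a topological order $\sigma$ of $N$ and observe that $(w,v)$ is the only $\sigma$-decreasing arc of $N'$, so any cycle must contain it, and $\sigma$-monotonicity of all remaining arcs forces the return path to leave $v$ via $e$ (since $(v,z)$ overshoots $w$), to enter $w$ via $f$ (since $u$ is unreachable from $v$), and to run through $N\setminus E(P)$ in between --- exactly the single configuration excluded by each hypothesis. The two arguments reach the same contradiction, but yours buys a cleaner derivation: no appeal to minimality, no case split over which swapped arcs the cycle uses, and the paper's ``can be trivially shortened or extended by one arc'' step (left to the reader there) is replaced by the one-line observation that a directed cycle needs a $\sigma$-decreasing arc. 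The paper's version is marginally more elementary in that it uses only cycle surgery and never invokes a topological order; you also spell out the spanning-tree and degree-preservation details that the paper compresses into its final sentence.
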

\begin{rem}
The $\rNNI$ move $P\to Q$ simply relocates $w$ onto the $uv$ arc.
Depending on the orientation of $f$, this $\rNNI$ move is equivalent (up to isomorphism) to
a distance-1 head-move or tail-move.
\end{rem}
\begin{proof}
	Both $zv,wu\notin E(N)$, because either arc would make $N$ cyclic.
	If $f$ is oriented away from $w$, then $uw\notin E(N)$, because $w$ has
	total degree 3.  If $e$ is oriented towards $v$, then $vz\notin E(N)$,
	because $v$ has total degree 3.

	Thus the network $N'$ created by the $\rNNI$ move transforming $P$ into
	$Q$ is well-defined, but it might contain an oriented cycle.  Suppose
	there is an oriented cycle in $N'$, let the shortest one be $\vec C$.
	Note that this forces $z\neq \rho$, because the root $\rho$ has in-degree 0.

	Suppose first, that $f,wv,e\in\vec C$: then $f$ is oriented towards
	$w$, $e$ is oriented away from $v$, and there is a directed $t(e)\to
	s(f)$ path in $N'$, which is also present in $N$. In any case, we have
	a contradiction.  If both $e,f\in \vec C$, but $wv\notin \vec C$, then
	$\vec C$ is not the shortest oriented cycle, because we could shortcut
	through $wv$.  Because $e$ and $f$ cannot be both traversed by the
	cycle, $\vec{C}$ can be trivially shortened or extended by one arc to
	form an oriented cycle in $N$.

	Lastly, observe that $T'=T-E(P)+E(Q)$ is a support tree of $N'$.
\end{proof}

\begin{lem}\label{lem:rootNNI}
	Let $N$ be a rooted phylogenetic network with support tree $T$.
	Suppose $P:u\ot z\to v\to w$ is a subgraph of 3 arcs in $T$. If there
	is no $u\to v$ path in $N$ and $vu\notin E(N)$, then the $\rNNI$ move
	switching $P$ to $Q:u\ot v\ot z\to w$ is valid and the resulting
	network is still tree-based. The statement holds even if $z=\rho$.
	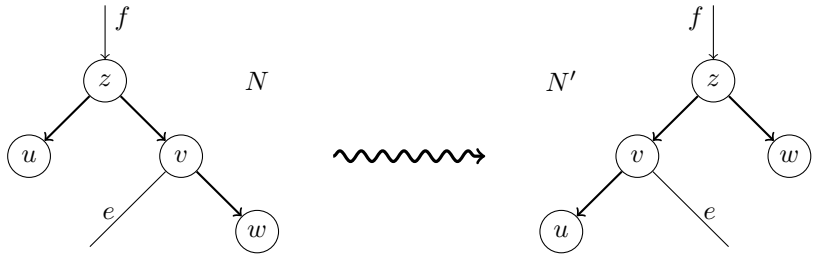
\begin{figure}[H]
		\centering
		\begin{tikzpicture}
			\begin{scope}
				\node[draw, circle, inner sep=0pt, minimum
					size=16pt] (r) at (0,0) {$z$};
				\node[draw, circle, inner sep=0pt, minimum
					size=16pt] (u) at ($ (r)-(1,1) $)
					{$u$};
				\node[draw, circle, inner sep=0pt, minimum
					size=16pt] (v) at ($ (r)-(-1,1) $)
					{$v$};
				\node[draw, circle, inner sep=0pt, minimum
					size=16pt] (w) at ($ (r)-(-2,2) $)
					{$w$};
				\draw[thick,->] (r)--(u);
				\draw[thick,->] (r)--(v);
				\draw[thick,->] (v)--(w);

				\draw (v) -- ($ (v)-(+1.2,1.2) $) node[near
					end,above] {$e$};
				\draw[->] ($ (r)+(0,1) $) -- (r)  node[near
					start, right] {$f$};
				\node at (2,0) {$N$};
			\end{scope}
			\begin{scope}[xshift=8cm]
				\node[draw, circle, inner sep=0pt, minimum
					size=16pt] (r) at (0,0) {$z$};
				\node[draw, circle, inner sep=0pt, minimum
					size=16pt] (u) at ($ (r)-(2,2) $)
					{$u$};
				\node[draw, circle, inner sep=0pt, minimum
					size=16pt] (v) at ($ (r)-(1,1) $)
					{$v$};
				\node[draw, circle, inner sep=0pt, minimum
					size=16pt] (w) at ($ (r)-(-1,1) $)
					{$w$};
				\draw[thick,->] (v)--(u);
				\draw[thick,->] (r)--(v);
				\draw[thick,->] (r)--(w);

				\draw (v) -- ($ (v)-(-1.2,1.2) $) node[near
					end,above] {$e$};
				\draw[->] ($ (r)+(0,1) $) -- (r)  node[near
					start, left] {$f$};
				\node at (-2,0) {$N'$};
			\end{scope}
			\draw[->,very thick,decorate,decoration={snake,segment
				length=8pt,amplitude=2pt}] (3,-1) -- (5,-1);
		\end{tikzpicture}
	\caption{A child $v$ of $z$ can move across $z$ into its other branch
	if there is no arc joining $\{u,v\}$. If $z=\rho$ then $f$ should be
	omitted from the picture.}\label{fig:rootNNI}
	\end{figure}
\end{lem}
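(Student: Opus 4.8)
The plan is to check the three conditions of Definition~\ref{d:rNNI} and then to exhibit an explicit support tree for $N'$, finishing as in the proof of Lemma~\ref{lem:NNI}. First I would record the move at the level of arcs: reading $P$ against the template $\{a,b\},\{b,c\},\{c,d\}$ gives $(a,b,c,d)=(u,z,v,w)$, so the switch to $Q$ deletes $(z,u)$ and $(v,w)$ and inserts $(v,u)$ and $(z,w)$, leaving $(z,v)$ in place. The degree bookkeeping is then immediate: $u$ exchanges its parent $z$ for $v$ and $w$ exchanges its parent $v$ for $z$, so the in- and out-degrees of $u$ and $w$ are unchanged (condition~1); while $z$ exchanges the child $u$ for $w$ and $v$ exchanges the child $w$ for $u$, so $z$ retains out-degree $2$ and its in-degree, and $v$ retains both its degrees (condition~2). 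The same bookkeeping shows that when $z=\rho$ the vertex $z$ stays a legitimate root of in-degree $0$ and out-degree $2$, which is the only place that case needs attention. Finally the inserted arcs are genuinely new --- $(v,u)\notin E(N)$ by hypothesis and $(z,w)\notin E(N)$ because $z$ already has out-degree $2$ --- so $N'$ is a well-defined simple graph.

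The crux is acyclicity, condition~\eqref{d:rNNI.acyclic}. Since $E(N')\setminus E(N)=\{(v,u),(z,w)\}$, any directed cycle $\vec C$ in $N'$ must traverse at least one inserted arc, while all of its remaining arcs lie in $N$; I would argue by cases on which inserted arcs occur on $\vec C$. If $\vec C$ uses $(v,u)$ but not $(z,w)$, then deleting $(v,u)$ from $\vec C$ exhibits a $u\to v$ path in $N$, contradicting the hypothesis that none exists. If $\vec C$ uses $(z,w)$, then following $\vec C$ out of $w$ along arcs of $N$ reaches $v$ (when $(v,u)$ also lies on $\vec C$) or $z$ (otherwise), and the resulting $w\to v$, respectively $w\to z$, path in $N$ closes a directed cycle together with the arcs $v\to w$, respectively $z\to v\to w$, of $N$, contradicting the acyclicity of $N$. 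Pinpointing exactly which hypothesis rules out each case --- the absence of a $u\to v$ path for the insertion of $(v,u)$, and the acyclicity of $N$ for the insertion of $(z,w)$ --- is where the real work lies; the remaining verifications are routine.

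For tree-basedness I would exhibit $T'=T-E(P)+E(Q)$ as a support tree of $N'$, exactly as in Lemma~\ref{lem:NNI}. Concretely $T'=T-\{(z,u),(v,w)\}+\{(v,u),(z,w)\}$, and $T'\subseteq N'$ since both inserted arcs belong to $N'$. Deleting the tree edges on $\{z,u\}$ and $\{v,w\}$ breaks $T$ into three pieces --- the subtree below $u$, the subtree below $w$, and the remainder containing $z$ and $v$ --- and the inserted edges on $\{v,u\}$ and $\{z,w\}$ reattach the first two pieces, so $T'$ is again a spanning tree. Because the move alters only the incoming arc of each of $u$ and $w$ (preserving every in-degree) and only the outgoing arc of each of $z$ and $v$ (preserving every out-degree), $T'$ is again an arborescence rooted at $\rho$, and its leaves --- its vertices of out-degree $0$ --- coincide with those of $T$, hence with the leaves of $N$, which are precisely the leaves of $N'$. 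Therefore $T'$ is a support tree and $N'$ is tree-based.
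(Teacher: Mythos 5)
Your proposal is correct and follows essentially the same route as the paper: verify the move is well-defined via degree bookkeeping and the hypotheses, rule out a directed cycle in $N'$ by a short case analysis, and exhibit $T'=T-E(P)+E(Q)$ as a support tree of $N'$. The only difference is organizational --- you index the cycle cases by the two inserted arcs $(v,u)$ and $(z,w)$, whereas the paper indexes them by the arcs $e$ and $f$ incident to $v$ and $z$; both analyses reduce to the same two contradictions (a $u\to v$ path in $N$, or a directed cycle already present in $N$), and your version is arguably the cleaner of the two.
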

\begin{proof}
	By the assumptions, there is no arc of any orientation between $u$ and
	$v$. Because $N$ is acyclic, $wz\notin E(N)$.
	Therefore the $\rNNI$ move
	switching $P$ to $Q$ produces a valid network.

	Suppose there is an oriented cycle in $N'$; let the shortest
	one be $\vec C$.

	Suppose first, that $e\in\vec C$: either $e$ is oriented towards $v$
	and $vu\in \vec C$, or $e$ is oriented away from $v$ and $zv\in\vec C$.
	In any case, this means that there is a $u\to v$ path in $N$.  If $f\in
	\vec C$ and $e\notin\vec C$, then $\vec{C}$ can be trivially shortened
	or extended by one arc to form an oriented cycle in $N$. If $e,f\notin
	\vec C$, then $\vec C$ is already an oriented cycle in $N$.

	Lastly, observe that $T'=T-E(P)+E(Q)$ is a support tree of $N'$.
\end{proof}

\section{The connectedness of the space of tree-based
networks}\label{s:connectedness}

\begin{lem}\label{lem:burl.rooted.conn}
The set of burl-rooted trees in tier $k$ is connected under $\rNNI$ moves.
\end{lem}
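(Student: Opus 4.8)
The plan is to show that every burl-rooted tree of tier $k$ on the leaf set $X$ can be carried, by a sequence of $\rNNI$ moves that stay inside $\TBN(X)$, to one fixed canonical representative; connectedness of the whole set then follows. Such a tree is completely described by two pieces of data: the special leaf $\ell_1$ sitting at the bottom of the reticulation chain $b_1,\dots,b_k$, and the rooted binary tree $S$ hanging below $u$ on the remaining leaves $X\setminus\{\ell_1\}$. The argument therefore splits into rearranging $S$ with $\ell_1$ held fixed, and changing which leaf plays the role of $\ell_1$.

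For the first part I would use only moves performed entirely inside $S$. Such a move leaves the burl untouched and cannot create a directed cycle, since $S$ lies strictly below $u$; a move across an internal arc of $S$ is an instance of case~\ref{lem:treeNNI} of \Cref{lem:NNI} (the two sides of a tree arc are joined by no directed path), while a move rotating the two children of the subtree-root is an instance of \Cref{lem:rootNNI} with $z=u$. As the space of rooted binary phylogenetic trees on a fixed leaf set is connected under $\rNNI$ (a classical fact that here also follows by applying these two lemmas inside $S$), these moves realise an arbitrary reshaping of $S$ while remaining among the burl-rooted trees of tier $k$; in particular any prescribed leaf $\ell^{\ast}\neq\ell_1$ can be brought to be a child of $u$.

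The main obstacle is changing the special leaf, and here a single move does not suffice: the reticulation $b_k$ must always point at $\ell_1$, and no one $\rNNI$ move both detaches $\ell_1$ from $b_k$ and attaches another leaf in its place while keeping the rigid burl. The special leaf can only be exchanged by a short detour through tree-based networks that are not burl-rooted. After making $\ell^{\ast}$ a child of $u$, I would apply three moves to the vertices $a_k,b_k,u$: a \Cref{lem:rootNNI} move on $u\ot a_k\to b_k\to\ell_1$, which moves $\ell_1$ onto $a_k$ and makes $u$ the child of $b_k$; then a \Cref{lem:NNI} move of type~\ref{lem:headNNI} on the directed path $a_k\to b_k\to u\to\ell^{\ast}$, which pushes $u$ past $b_k$ so that $b_k$ now points at the leaf $\ell^{\ast}$; and finally a \Cref{lem:rootNNI} move on $\ell_1\ot a_k\to u\to b_k$, which reinstates the reticulation arc $a_k\to b_k$. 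The result is again a burl-rooted tree of tier $k$, now with special leaf $\ell^{\ast}$ and with $\ell_1$ sitting as a child of $u$ in the subtree. In each step the hypotheses of the cited lemma are immediate, because $u$ and the leaves lie below the burl, so there are no offending directed paths; one only has to track the in- and out-degrees of $a_k,b_k,u$ to confirm that $b_k$ stays a reticulation and $a_k$ stays a tree vertex, so the tier is preserved throughout.

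Putting the two parts together yields the claim. Starting from an arbitrary burl-rooted tree of tier $k$, I first use subtree moves to bring the intended special leaf up to $u$, then run the three-move swap to seat it below the reticulation chain, and finally use subtree moves once more to reshape the subtree into a fixed canonical form. Every burl-rooted tree of tier $k$ is in this way joined, by $\rNNI$ moves through $\TBN(X)$, to a single canonical tree, and hence the set of burl-rooted trees of tier $k$ lies in one connected component.
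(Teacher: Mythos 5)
Your first paragraph is, in essence, the paper's entire proof of this lemma: the burls of two tier-$k$ burl-rooted trees are identical, so one only has to reshape the rooted tree hanging below $u$, which is possible by the classical connectedness of rooted tree space under $\rNNI$ moves. The additional work you do---changing which leaf plays the role of $\ell_1$---addresses a genuine subtlety that the paper's proof of this lemma silently skips. The paper deals with it not here but inside the proof of \Cref{thm:TBNrNNIconn}, via the remark that ``the choice of leaf $\ell$ on the burl has been arbitrary'': there, both networks are reduced to burl-rooted trees with the \emph{same} special leaf, so the lemma is only ever invoked for burl-rooted trees sharing $\ell_1$. Your plan is therefore more ambitious and, if carried out correctly, proves a stronger statement than the paper's argument does.

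However, your three-move swap is not justified as written, and the flaw is concrete. Both \Cref{lem:NNI} and \Cref{lem:rootNNI} require the three arcs of $P$ to lie in a support tree $T$ of the current network, and for $k\ge 2$ this hypothesis is not merely non-immediate but \emph{unsatisfiable} for the paths you chose. Indeed, $b_{k-1}$ is a reticulation vertex that is not a leaf of $N$, so its unique out-arc $(b_{k-1},b_k)$ must belong to every support tree (otherwise $b_{k-1}$ would be a leaf of $T$); since $b_k$ then already has its one tree parent, the arc $(a_k,b_k)$ lies in \emph{no} support tree. This breaks your first move (which needs $(a_k,b_k)\in T$ on the path $u\ot a_k\to b_k\to\ell_1$) and your second move (which needs $(a_k,b_k)\in T$ on the path $a_k\to b_k\to u\to\ell^{\ast}$); the third move fails the same way, since after the second move the arc $(u,b_k)$ is excluded from every support tree by the same forcing argument. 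The good news is that your three moves \emph{are} valid $\rNNI$ moves and do preserve tree-based-ness: one can check the conditions of Definition~\ref{d:rNNI} directly (the new arcs cannot create cycles, since everything below $u$ is a tree), and exhibit a support tree of each intermediate network by hand, because the arc that gets forced out of the tree is always a reticulation arc. So the construction is repairable, but the verification must be done from the definition rather than by citing \Cref{lem:NNI} and \Cref{lem:rootNNI}. A further small point: when $|X|=2$ the vertex $u$ is itself a leaf, so your second move has no arc $f$ available; in that case your first move alone already yields the burl-rooted tree with the special leaves exchanged.
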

\begin{proof}
By definition, the burls of burl-rooted trees in tier $k$ are identical, and
they only differ by the trees below the burl (vertex $u$ in
Figure~\ref{fig:burl}).  Since the space of trees is connected under $\rNNI$
moves~\cite{robinson1971comparison}, one can be transformed into the other,
treating the vertex in position $u$ as the root.
\end{proof}

We can now prove our main theorem.
\begin{thm}\label{thm:TBNrNNIconn}
	$\TBN(X)$ is connected under $\rNNI$ moves.
\end{thm}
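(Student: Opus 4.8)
The plan is to establish connectedness by exhibiting a canonical target network within each tier (where the tier is the number $k$ of reticulation vertices, an invariant preserved by $\rNNI$ moves) and showing that every tree-based network in that tier can be transformed into the canonical form by a sequence of valid $\rNNI$ moves. Since $\rNNI$ moves are reversible, connecting each network to a common canonical representative suffices. The natural canonical form is the burl-rooted tree of \Cref{d:burl.rooted.tree}: \Cref{lem:burl.rooted.conn} already shows that all burl-rooted trees in a fixed tier $k$ are mutually connected, so it is enough to transform an arbitrary tree-based network $N$ in tier $k$ into \emph{some} burl-rooted tree, after which we route through the canonical burl-rooted tree. The tier-$0$ case is simply the classical connectedness of phylogenetic $X$-trees under $\rNNI$~\cite{robinson1971comparison}, so we may assume $k\ge 1$.

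First I would fix a support tree $T$ of $N$, whose existence is guaranteed by tree-basedness, and use it throughout so that every $\rNNI$ move I perform is of the kind analysed in \Cref{lem:NNI} or \Cref{lem:rootNNI} — that is, each move acts on a length-$3$ directed path (or the root-branching configuration) lying in $T$ together with one extra incident arc, and therefore preserves tree-basedness automatically via the ``$T'=T-E(P)+E(Q)$ is a support tree'' conclusion of those lemmas. The strategy is to migrate the reticulation arcs one at a time toward the root and collect them into a single ``burl'' chain hanging off the root. Concretely, each reticulation vertex $b_i$ receives an arc $(a_i,b_i)$ in the burl; using the head-move-type case~\ref{lem:headNNI} and tail-move-type case~\ref{lem:tailNNI} of \Cref{lem:NNI}, I would slide the endpoints of a reticulation arc upward along tree paths of the support tree, and use the root-relocation move of \Cref{lem:rootNNI} to push the relevant vertices into the two branches descending from the root. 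The acyclicity side-conditions of the lemmas (no forbidden $t(e)\to s(f)$ path, no arc on $\{u,v\}$) are exactly what must be checked at each slide, but since we always move toward the root along tree edges and the reticulation structure above has already been regularised, these conditions can be arranged to hold.

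The hard part will be the bookkeeping that shows the reticulation arcs can be serialised into the stacked burl configuration without obstructing one another: as one reticulation arc is lifted toward the root, its source and head must pass the already-placed burl arcs, and one must verify that the intermediate networks remain genuine tree-based phylogenetic networks (in particular that no two reticulation arcs ever need to sit on the same pair of vertices, and that acyclicity is never violated). I would handle this by an induction on $k$: assuming the top $k-1$ reticulations have been arranged into a burl with a pendant subtree below, isolate the remaining reticulation arc, slide its tail and head up the two root-branches using \Cref{lem:NNI}\ref{lem:tailNNI} and \Cref{lem:headNNI} respectively until it becomes the topmost burl arc $(a_1,b_1)$, and then treat the rest as a tier-$(k-1)$ problem on the network hanging below. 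The base case $k=1$ is the statement that a single reticulation arc can always be lifted to hang directly off the two children of the root, which follows by repeated application of the sliding moves together with \Cref{lem:rootNNI} to clear the root branches.

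Once an arbitrary $N$ in tier $k$ has been reduced to a burl-rooted tree, \Cref{lem:burl.rooted.conn} connects it to the canonical burl-rooted tree of that tier, and reversibility of $\rNNI$ moves then connects any two tree-based networks in the same tier. It remains to connect across tiers, which I would not expect to be needed for connectedness of $\TBN(X)$ as a whole only if tiers are the connected components; but since an $\rNNI$ move preserves the number of reticulations, $\TBN(X)$ decomposes as a disjoint union over tiers and the theorem is precisely the statement that each tier is connected, which the above establishes.
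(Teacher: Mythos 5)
Your high-level skeleton matches the paper's: reduce an arbitrary tree-based network to a burl-rooted tree (\Cref{d:burl.rooted.tree}), invoke \Cref{lem:burl.rooted.conn}, use reversibility, and read the theorem tier-by-tier. The gap is in the core reduction, which is exactly where the mathematical content lies. You propose to slide both the tail \emph{and the head} of each reticulation arc upward toward the root, citing \Cref{lem:NNI}\ref{lem:tailNNI} and \ref{lem:headNNI}. But neither of those cases can move a reticulation vertex (a head) up past a tree vertex: in the path $u\to v\to w\to z$ with $w$ the head being lifted, the third arc $f$ at $w$ points \emph{into} $w$, so case~\ref{lem:headNNI} fails, and the third arc $e$ at the tree vertex $v$ points \emph{away from} $v$, so case~\ref{lem:tailNNI} fails; the only applicable case is \ref{lem:treeNNI}, which carries the global side condition that $N$ contain no directed $t(e)\to s(f)$ path. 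That condition genuinely fails in natural situations --- for instance, if the tail $s(f)$ of the arc you are lifting lies in the subtree below $v$'s other child $t(e)$, the move would create a directed cycle --- so your sentence ``these conditions can be arranged to hold'' is precisely the statement that needs proof, and your induction supplies no scheduling argument or termination measure to establish it. (Your assignment of cases is also backwards even where it could apply: tails move up via case~\ref{lem:headNNI}, while case~\ref{lem:tailNNI} moves heads \emph{down}, not up.)

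The paper's proof is built to avoid exactly this obstruction, and it does so with three devices your proposal lacks. First, it never moves heads upward: it decreases the potential $\varTheta_{N,T}$, which sums the support-tree distances of reticulation \emph{sources} to the root and of reticulation \emph{targets} to a fixed leaf $\ell$, so sources migrate up (case~\ref{lem:headNNI}) and targets migrate \emph{down} (case~\ref{lem:tailNNI}); minimality of the chosen arc is what discharges the side conditions such as $e\neq vz$ and $f\neq uw$. Second, after $\varTheta$ reaches its minimum $\tau^2$, the paper performs a \emph{support-tree change} (e.g.\ $T'=T-b_2t(e)+e$), which is not an $\rNNI$ move at all but a re-choice of tree base for the same network; this is what converts the ``sources near $\rho$, targets near $\ell$'' configuration into a burl-rooted tree, making it unnecessary to ever push heads past tree vertices. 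Third, the paper runs a case analysis on the root's out-degree in $T$ (cases (A)--(D)), using \Cref{lem:rootNNI} and further re-basing steps to reduce everything to the out-degree-1 case; your fixed-support-tree induction does not address how the root's configuration in $T$ is handled at all. Without these ingredients your reduction, as described, does not go through.
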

\begin{proof}
	We show that any tree-based network can be transformed into a
	{burl-rooted tree} (Definition~\ref{d:burl.rooted.tree}), and use the
	fact that it is possible to
	move between any two networks in that form
	(Lemma~\ref{lem:burl.rooted.conn}).

	We fix an arbitrary tree-based network $N$, and a support tree $T$ for
	$N$. In each step we need to cover four cases regarding $N$ and $T$ and
	their root $\rho$:
	\begin{enumerate}[label={(\Alph*)}]
		\setlength{\itemsep}{0pt}
		\item $\rho$ has out-degree 1 in $T$;
		\item $\rho$ has out-degree 2 in $T$ and on both sides of the
			root there are branching points in $T$;
		\item $\rho$ has out-degree 2 in $T$, but on one side of the
			root there are no branching points in
			$T$;
		\item $\rho$ has out-degree 2 in $T$ and $T$ is path (in this case $|X|=2$).
	\end{enumerate}
	Note, that multiple support trees may exist for a fixed tree-based
	network $N$. Furthermore, the degree of $\rho$ might be $1$ in one
	support tree, and $2$ in another, which means that in the first case
	$\rho$ must be the source of a reticulation arc.

	\subsection{Case (A), when \texorpdfstring{\boldmath{$|X|=1$}}{|X|=1}.}

	Although this is a degenerate case, we need to deal with it for the sake
	of completeness. There is no burl-rooted tree when there is only one
	leaf. Let $e=\rho v$ be the reticulation arc incident on the root.
	\Cref{lem:NNI}\ref{lem:headNNI} applies to the source of reticulation
	that is below $v$ and closest to it. Therefore, we can move every source
	of reticulation between $\rho$ and $v$ one-by-one. Next, via
	\Cref{lem:NNI}\ref{lem:tailNNI}, we can move every target of
	reticulation below $v$ similarly, in an appropriate order. Lastly, we
	can freely permute the sources between $\rho$ and $v$ via
	\Cref{lem:NNI}\ref{lem:headNNI}, and similarly, we can permute the
	targets below $v$ freely via \Cref{lem:NNI}\ref{lem:tailNNI}. Thus it is
	clear that any two networks of tier-$k$ that have exactly one leaf each
	are connected via $\rNNI$ moves.

	\subsection{Case (A), when \texorpdfstring{\boldmath{$|X|\ge
	2$}}{|X|>=2}.}\label{sec:caseA2}

	Let $b_1$ be the branching point in $T$ which is the closest to $\rho$
	(in both $T$ and $N$). Let $\ell$ be an arbitrarily chosen leaf, and
	let $b_2$ be the closest branching point to it. (We may have
	$b_1=b_2$). Let $d_{T}(x,y)$ be the undirected distance between $x$ and $y$ in the
	support tree $T$ for $N$. Define the quantity
	\[\varTheta_{N,T}:=\sum_{f\in E(N)\setminus E(T)}
	d_{T}(\rho,s(f))+\sum_{e\in E(N)\setminus E(T)}
	d_{T}(t(e),\ell). \]
	Let the number of reticulation arcs be $\mathrm{\tau}$. We claim that
	via $\rNNI$ moves we can reduce $\varTheta$ to
	$\binom{\tau}{2}+\binom{\tau+1}{2}=\tau^2$ (note, that there
	is a reticulation arc whose source is $\rho$), ie.,\ in the desired
	network, a vertex $v$ is
	\begin{itemize}
		\setlength{\itemsep}{0pt}
		\item the source of a reticulation if and only if $v=\rho$
			or $v$ is between $\rho$ and $b_1$ on the
			support tree,
		\item the target of a reticulation if and only if $v$ is between
			$b_2$ and $\ell$ on the support tree.
	\end{itemize}

	Suppose $f$ is a reticulation arc for which
	$d_{T}(\rho,s(f))>d_{T}(\rho,b)$, and the left hand side is
	minimal wrt.\ $f$. \Cref{lem:NNI}\ref{lem:headNNI} applies to our case,
	because $e=vz$ contradicts the minimality assumptions on $f$. For the
	same reason, the $\rNNI$ move specified by \Cref{lem:NNI} decreases
	the first sum in $\varTheta$ by $1$.

	If such an $f$ does not exist, but $\varTheta>\tau^2$, then
	there exists an $e$ such that $t(e)$ is not between $b_2$ and $\ell$.
	Choose the $e$ for which $d_{T}(t(e),\ell)>d_{T}(b_2,\ell)$,
	and the left hand side is minimal wrt.\ $e$. We have three cases.

	\begin{enumerate}[label={$($\roman*$)$}]
		\setlength{\itemsep}{0pt}
		\item If the undirected $t(e)\to \ell$ path in $T$
			starts on an out-arc of $t(e)$, then
			\Cref{lem:NNI}\ref{lem:tailNNI} applies to $e$ (with
			the same name), because $f=uw$ contradicts the
			minimality assumption on $e$. The $\rNNI$ move
			specified by \Cref{lem:NNI} decreases $\varTheta$ by
			$1$.
		\item If the undirected $t(e)\to \ell$ path in $T$
			starts on an in-arc of $t(e)$, and the next arc is in
			opposite orientation (see the bold arcs in $N$ in
			\Cref{fig:rootNNI}), then we apply \Cref{lem:rootNNI}
			to $e$. The conditions of the lemma are satisfied,
			because the sources of reticulation arcs are closer to
			the root than the parent of $t(e)$ in $T$. By the
			minimality assumption on $e$, the $\rNNI$ move reduces
			$\varTheta$ by at least $1$.
		\item If the undirected $t(e)\to \ell$ path in $T$
			starts on an in-arc of $t(e)$, and the next arc is in
			the same orientation (see graph $N'$ in
			\Cref{fig:NNI}), then we apply the $\rNNI$ move of
			\Cref{lem:NNI}\ref{lem:treeNNI} to $e$, but with the
			labels of arcs $e$ and $f$ exchanged. Because $b_1\neq
			v,w,z$ in the setup of \Cref{lem:NNI}, the conditions
			of the lemma are satisfied. The $\rNNI$ move decreases
			$\varTheta$ by $1$ (by the minimality assumption on
			$e$). %
	\end{enumerate}

	We may assume now that $\varTheta_{N,T}=\tau^2$. Let $e$ be the
	reticulation arc whose source is the root $\rho$. Via a couple of
	$\rNNI$ moves provided by \Cref{lem:NNI}\ref{lem:treeNNI}, we may assume
	that $t(e)$ is the child of $b_2$ in $T$ (while keeping
	$\varTheta=\tau^2$). Change the tree base while keeping the network $N$
	intact: let $T'=T-b_2t(e)+e$. This a support tree for $N$, because $b_2$
	is a branching vertex in $T$.

	Although $e$ is no longer a reticulation arc, $b_2t(e)$ becomes one.
	If $b_1=b_2$, we have a burl-rooted tree. Otherwise, $b_2$ can be moved
	to the path between $\rho$ and $b_1$ via
	\Cref{lem:NNI}\ref{lem:headNNI}.  Lastly, note that the choice of leaf
	$\ell$ on the burl has been arbitrary.

	\subsection{Cases (B) and (C).}\label{sec:caseBC}

	Suppose $b_1$ is a branching vertex closest to $\rho$ in $T$. Let $\ell$
	be an arbitrary leaf below $b_1$ in $T$, and let $b_2$ be the branching
	point it is joined to in $T$.  On the branch of the root containing
	$b_1$ we may perform the procedure outlined in the previous Case (A)
	until $\varTheta$ is reduced to its minimum (counting only sources or
	targets of reticulations on the branch of $b_1$). We have to check that
	reticulation arcs that join the two main branches (originating at the
	root) do not interfere with the previously described $\rNNI$ sequence.
	This is easily seen to be the case.

	\medskip

	We will reduce this case to Case (A) now. Let $u,v$ be the children of
	$\rho$ such that $v$ is on the same branch as $b_1$ in $T$.  Let $t(e)$
	be the target of reticulation which is the child of $b_2$ on $T$.

	If $vu\in E(N)$, then we change the support tree of $N$ to
	$T'=T-\rho u+vu$, and the reduction to Case (A) is done.

	If $vu\notin E(N)$, then the root can be moved down along the $\rho\to
	t(e)$ path in $T$ until $\rho$ is between $b_2$ and $t(e)$; all we
	need to do is check that the conditions of \Cref{lem:rootNNI} are
	satisfied at each step. Because a directed path cannot traverse the
	root and all of the targets of reticulation arcs are below $b_2$ in
	the branch of $v$, the conditions are satisfied. Once we have $t(e)\ot
	\rho\to b_2$, we change the support tree to
	$T'=T-\rho t(e)+e$. We have completely reduced this case to
	Case (A).

	\subsection{Case (D)}

	Let $u,v$ be the two children of $\rho$. Without loss of generality, we
	may assume that $v$ or one of the vertices below it in the support tree
	is a target of reticulation. If $vu\in E(N)$, we can rewire the support
	tree through $vu$ and reduce this case to Case (A). If there is a $u\to
	v$ path in $N$, then $v$ must be the target of a reticulation arc $e$,
	such that $s(e)$ is below $u$ in $T$ (otherwise $N$ would contain an
	oriented cycle). Via \Cref{lem:NNI}\ref{lem:headNNI}, we may assume that
	$e=uv$, and we can rewire the support tree through $uv$ to reduce this
	case to Case (A).

	If $vu\notin E(N)$ and there is no $u\to v$ path in $N$, we can perform
	the $\rNNI$ move described by \Cref{lem:rootNNI}. By repeating the
	argument, we may assume that $v$ is the target of a reticulation arc,
	in which case we are done (as above).
\end{proof}

\subsection{Connectedness under distance-1 tail-moves}

\begin{thm}\label{thm:TBNtailmoveconn}
	$\TBN(X)$ is connected under distance-1 tail-moves.
\end{thm}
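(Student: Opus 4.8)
The plan is to mirror the proof of \Cref{thm:TBNrNNIconn}: reduce an arbitrary tree-based network to a burl-rooted tree (\Cref{d:burl.rooted.tree}) and then move freely within a tier of burl-rooted trees, but arranging that every $\rNNI$ used is realised as a distance-1 tail move inside $\TBN(X)$. By the remark following \Cref{lem:NNI}, each $\rNNI$ of that lemma is already a distance-1 head- or tail-move, so the task reduces to inspecting every move in the proof of \Cref{thm:TBNrNNIconn} and checking that it can be taken to be a tail move.

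The connectedness of a tier of burl-rooted trees (\Cref{lem:burl.rooted.conn}) transfers immediately: it reduces to the connectedness of the rooted trees hanging below the burl, and rooted trees are connected under distance-1 tail moves. Indeed, a tail move is a rooted subtree-prune-and-regraft, such moves connect rooted trees, and a regraft along a path factors into one-arc slides.

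For the reduction itself, the moves that push a reticulation \emph{source} towards the root (\Cref{lem:NNI}\ref{lem:headNNI}) are tail moves of the reticulation arc $f$, since they relocate its tail $s(f)=w$. The moves that push a reticulation \emph{target} one step towards the leaf $\ell$ (\Cref{lem:NNI}\ref{lem:tailNNI}) are nominally head moves; but whenever the vertex $w$ immediately below the target is a tree vertex, \Cref{lem:NNI}\ref{lem:headNNI} applies to the very same three arcs and exhibits the move as a tail move of the arc at $w$. I would therefore order these moves so that targets are \emph{stacked} from $\ell$ upwards rather than ever crossing one another, which keeps the leap-frogged vertex $w$ a tree vertex throughout; the subcase handled by \Cref{lem:NNI}\ref{lem:treeNNI} is dealt with in the same fashion.

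The main obstacle is the root-type move of \Cref{lem:rootNNI}, used both to route a reticulation target around a branching vertex and to reposition the root in Cases (B)--(D). This $\rNNI$ rotates two subtrees across the arc $(z,v)$, changing the tails of two arcs at once, so it is \emph{not} a single tail move. Two things mitigate this. First, re-selecting a support tree for a fixed $N$ costs no moves, so every place where the original argument repositions the root purely by a support-tree change (such as $T'=T-\rho t(e)+e$) needs no simulation at all. Second, each genuine rotation can be attempted as a short sequence of distance-1 tail moves that swaps the two subtrees through a temporary regraft. I expect the crux of the whole proof to be verifying that these simulating sequences can always be chosen to pass through acyclic, tree-based intermediate networks.
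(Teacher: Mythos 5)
Your high-level plan coincides with the paper's: rerun the proof of \Cref{thm:TBNrNNIconn}, realising each $\rNNI$ move as a distance-1 tail move or simulating it by a short sequence of such moves. But one of your structural claims is wrong, and the real work is missing. The wrong claim: the move of \Cref{lem:rootNNI} is \emph{not} ``never a single tail move.'' Whenever $v$ is the tail of $e$, deleting $e$, suppressing $v$ (creating $zw$), subdividing $zu$ and re-attaching $e$ at the new vertex is exactly a distance-1 tail move producing $Q$; the ``two tails change at once'' bookkeeping is an artefact of the $\rNNI$ description, not an obstruction. This matters because in the applications of \Cref{lem:rootNNI} in \Cref{sec:caseBC} (and in Case (D)) the vertex $v$ is always a source or a branching vertex of the support tree, hence the tail of $e$, so those moves come for free; the only problematic applications occur in \Cref{sec:caseA2} when $v$ is a reticulation vertex, and these the paper simulates by two explicit distance-1 tail moves (move the tail of $zu$ onto $vw$, then the tail of $zw$ onto the other in-arc of $v$), verifying that the intermediate network is acyclic and tree-based.

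The genuine gap is that you never carry out the simulations you yourself identify as ``the crux.'' Your stacking order for \Cref{lem:NNI}\ref{lem:tailNNI}/\ref{lem:treeNNI} cannot remove all head-type moves: at the end of Case (A) the reticulation arc $e$ with $s(e)=\rho$ must have its target brought up past the already-stacked targets so that $t(e)$ becomes the child of $b_2$ (this is forced, since the support-tree swap $T'=T-b_2t(e)+e$ only works at a branching vertex), and each such step swaps two reticulation vertices, i.e., is a pure head move with $t(e)=v$ and $t(f)=w$ in the labelling of \Cref{lem:NNI}. Your ordering gives no control over which target ends up adjacent to $b_2$, so in general some target must leap-frog another target; the same configuration arises in case (ii) of \Cref{sec:caseA2} when the sibling branch begins with a stacked target, and in the one-leaf case, which you do not discuss at all. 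The paper resolves exactly these situations by decomposing the offending $\rNNI$ into three tail moves --- using that $e$ and $f$ are not support-tree arcs, so their tails are tree vertices and every intermediate network keeps the same tree-base --- and then refining each tail move into distance-1 tail moves along the support tree, proving acyclicity of all intermediates by a reachability argument. Without constructions of this kind, together with the acyclicity and tree-based-ness checks for the intermediate networks, your text is a plan for a proof rather than a proof.
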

\begin{proof}
In the proof of \Cref{thm:TBNrNNIconn}, each $\rNNI$-move performed falls into
the scope of either \Cref{lem:NNI} or \Cref{lem:rootNNI}. We claim that all of
these $\rNNI$-moves performed during the proof are either already distance-1
tail-moves, or they can be simulated with tail-moves (see Definition~\ref{d:head.tail.move}).

The $\rNNI$-move described by \Cref{lem:rootNNI} is a distance-1 tail-move if
$v$ is the tail of $e$. In \Cref{sec:caseBC}, this is always the case for
applications of \Cref{lem:rootNNI}. In \Cref{sec:caseA2}, however, it is
possible that in terms of the labeling used in \Cref{lem:rootNNI}, $v$ is the
head of $e$ if $z$ is not the root. In both of these cases the $\rNNI$-move can
be simulated with two distance-1 tail-moves: first, move the tail of $zu$ onto
$vw$, and then move the tail of $zw$ onto the incoming arc of $v$ which is
different from $e$. The intermediate graph is a phylogenetic network, because
the first tail-move is an $\rNNI$-move to which \Cref{lem:NNI}\ref{lem:treeNNI}
applies. Moreover, it trivially has a tree-base, because $f$, $zu$, $zv$,
and $vw$ are all supporting the tree-base of $N$.

The $\rNNI$-move described by \Cref{lem:NNI} is a distance-1 tail-move if $v$ is
the tail of $e$ or $w$ is the tail of $f$. Otherwise, if $t(e)=v$ and $t(f)=w$,
the $\rNNI$-move can be decomposed into three tail-moves. Let $s(e)=x$ and
$s(f)=y$, so that $e=xv$ and $f=yw$. By the assumptions of
\Cref{lem:NNI}\ref{lem:treeNNI}, $y\neq u$. The arcs $e$ and $f$ are not
supporting the tree-base of $N$, hence $x$ and $y$ are not reticulation
vertices. First, move the tail of $f$ to $uv$ and let the new incoming arc of
$w$ be $y'w$. Secondly, move the tail $x$ of $e$ to the original position of
$y$. Lastly, move the tail of $y'w$ to the original position of $x$. The two
intermediate networks are trivially acyclic, because both the starting network
$N$ and the target network $N'$ are acyclic and $y'w$ is the only additional arc
in the two intermediate networks. Both of the intermediate networks possess a
tree-base, because the arcs whose tails were moved are not contained in the
chosen tree-base.

Although the three tail-moves are generally not distance-1, they can be broken
up into distance-1 tail-moves, such that the tail traverses the shortest
undirected path in the support tree. Let $N''$ be any intermediary network along
these refined steps. For any vertex $x$, the set of vertices that are
reachable through a direct path starting from $x$ is broader in $N$ (or
alternatively, in $N'$) than in $N''$. Thus $N''$ is acyclic, too.
\end{proof}

\renewcommand*{\bibfont}{\small} \printbibliography[]

\end{document}